\newcommand{\model}{\mathcal{P}}
\newcommand{\RR}{\mathbb{R}}
\newcommand{\eps}{\varepsilon}
\newcommand{\lP}{\underline{P}}
\newcommand{\uP}{\overline{P}}
\newcommand{\uPi}{\overline{\Pi}}
\newcommand{\yobs}{y_{\text{obs}}}
\newcommand{\Mhat}{\widehat{M}}
\theoremstyle{plain} 
\newtheorem{thm}{Theorem}
\title{
  Stochastic Optimization for Numerical Evaluation of Imprecise Probabilities
}
\author{Nicholas Syring\footnote{Department of Statistics, Iowa State University, {\tt nsyring@iastate.edu}. } \; and \; Ryan Martin\footnote{Department of Statistics, North Carolina State University; {\tt rgmarti3@ncsu.edu}}}
\date{\today}
\begin{document}
\maketitle

\begin{abstract}
In applications of imprecise probability, analysts must compute lower (or upper) expectations, defined as the infimum of an expectation over a set of parameter values.  Monte Carlo methods consistently approximate expectations at fixed parameter values, but can be costly to implement in grid search to locate minima over large subsets of the parameter space.  We investigate the use of stochastic iterative root-finding methods for efficiently computing lower expectations.  In two examples we illustrate the use of various stochastic approximation methods, and demonstrate their superior performance in comparison to grid search.
  \smallskip

\emph{Keywords and phrases:} inferential model; lower expectation; Monte Carlo; plausibility function; root-finding; stochastic approximation
\end{abstract}

\section{Introduction}
\label{sec:intro}

Applications of imprecise probability require the evaluation of lower and upper expectations.  That is, let $\model$ denote a collection of probability distributions $P$ and, if $g$ is a $P$-integrable function for each $P \in \model$, then let 
\[ \lP g = \inf_{P \in \model} P g \quad \text{and} \quad \uP g = \sup_{P \in \model} P g \]
denote the lower and upper expectation of $g$, respectively, where $Pg = \int g(y) \, P(dy)$ is the ordinary expectation.  In practical problems, the collection $\model$ will typically be indexed by a finite-dimensional parameter $\theta \in \Theta \subseteq \RR^q$, $q \geq 1$, which provides both some simplicity and extra flexibility.  That is, let 
\[ M(\theta) = \int g_\theta(y) \, P_\theta(dy), \]
now for a class of functions $\{g_\theta: \theta \in \Theta\}$, indexed by $\Theta$, and consider the optimization problem 
\[ \inf_\theta M(\theta) \quad \text{and/or} \quad \sup_\theta M(\theta). \]

\citet{robbins.1951} provided a classical example of this problem.  Consider a linear regression model $M(x_i) = x_i^\top\beta + \epsilon_i$ where $y_i = M(x_i)$ is a response, $x_i$ is a vector of covariates, and $\epsilon_i$, $i=1,\ldots,n$, are independent, mean-zero random variables.  Usually, the goal is to estimate the parameter $\beta$ given observations $(y_i, x_i)$, $i=1,\ldots,n$.  However, the real interest is often to find the optimal input $x$ to produce a desired response $y$.  For example, farmers would be interested in maximizing crop yield by varying the amounts of fertilizers and pesticides applied to their fields.  In such a case, it is not necessary to assume linearity or any particular form of the regression function.  To maximize $M(x)$, one could implement a fixed design, which is akin to a (stochastic) grid search.  Alternatively, \citet{robbins.1951} provide an iterative method to maximize $M(x)$, discussed in Section~2, which can be understood as a method of optimal sequential design in regression models.    

Another general example arises in the context of statistical inference, and the so-called inferential model framework presented in \citet{martin.liu.book} and \citet{martin.liu.poss}.  Those methods rely on nested random sets or, alternatively, possibility measures, which require maximizing the expectations of certain (data- and) $\theta$-dependent functions.  For example, one relatively simple inferential model construction \citep[see, e.g.,][]{martin.2015, martin.2018} defines a plausibility contour 
\[ \pi(\theta; \yobs) = P_\theta\{T(Y,\theta) > T(\yobs, \theta)\}, \]
where $\yobs$ denotes the observed data, $Y$ is a random element having distribution $P_\theta$, and $T(y,\theta)$ is a suitably chosen scalar-valued function.  See Section~\ref{ss:logistic} for an example.  If $A$ is some assertion or hypothesis about the unknown value of $\theta$, the inferential model's upper probability, given the observed data $\yobs$, is the supremum 
\[ \uPi(A; \yobs) = \sup_{\theta \in A} \pi(\theta; \yobs), \]
which is a special case of the general problem presented above, with $g_\theta(y) = 1\{T(y,\theta) > T(\yobs, \theta)\}$.

Recently, several authors explored Monte Carlo sampling methods for evaluating a lower or upper expectation.  The basic idea is Monte Carlo with a grid search.  That is, let $\theta_j$, $j \in 1,\ldots,J$ denote a finite subset of $\Theta$ and, for each $j$, let 
\[ \Mhat(\theta_j) = \frac{1}{N} \sum_{n=1}^N g_{\theta_j}(Y_n^{(j)}), \quad Y_n^{(j)} \sim P_{\theta_j}, \]
be a Monte Carlo estimate of $M(\theta_j)$.  Then the lower and upper expectations can be readily approximated by $\min_j \Mhat(\theta_j)$ and $\max_j \Mhat(\theta_j)$, respectively. 

\citet{decadt.2019} show such Monte Carlo estimates of lower expectations are consistent.  In practice, however, the accuracy of these Monte Carlo approximations depends on the smoothness of the function and the fineness of the grid. As a consequence, it may be computationally prohibitive to produce enough Monte Carlo samples for a good approximation, especially when the parameter is multi-dimensional.  To lighten the computational burden, \citet{fetz.2016} and \citet{fetz.2019} propose reusing Monte Carlo samples by weighted resampling.  

In contrast, we propose to replace grid search by an iterative procedure.  The familiar deterministic setting provides some useful background.  Suppose the function $M(\theta)$ is known, sufficiently smooth, and convex.  Then, Newton's method with updates
\begin{align*}\theta_{t+1} = \theta_t - \{\ddot M(\theta_t)\}^{-1} \dot M(\theta_t)\end{align*}
converges quadratically to the minimizer $\theta^\star$, where the dots denote differentiation with respect to $\theta$.  Alternatively, the gradient descent update
\[\theta_{t+1} = \theta_t - \eps \dot M(\theta_t)\]
does not require the second derivative, and converges for a small enough step size $\eps>0$ and an initial point $\theta_0$ close enough to $\theta^\star$.  An important advantage of this approach is that it is less sensitive to the dimension of the optimization problem than a grid search.  However, in our present context, the function we seek to optimize, and hence its gradient, are not available in closed-form, so Newton's method cannot be applied directly.  Is there a stochastic version?  

\citet{robbins.1951} proposed a stochastic analog of Newton's method and showed that, under certain conditions, its sequence of iterates converges in probability to $\theta^\star$.  Roughly speaking, the Robbins--Monro procedure is a gradient descent algorithm but with the known derivative replaced by a (crude) Monte Carlo approximation thereof.  In this paper, we consider the use of the Robbins--Monro algorithm and its variants for numerical evaluation of lower and upper expectations.  

In Section~\ref{S:a} we discuss existing results that characterize problems suitable for stochastic optimization.  Section~\ref{s:b} reviews a number of variations on the classical stochastic optimization algorithm and discusses practical considerations such as the choice of step size $(\eps_t)$.  Section~\ref{s:examples} presents two relevant examples comparing the performance of various stochastic optimization methods with that of grid search.  Some concluding remarks are given in Section~\ref{S:discuss}.

\section{Algorithm and its Properties}
\label{S:a}

\subsection{Robbins--Monro Algorithm}

For the analysis in this section, suppose that the optimization problem is sufficiently regular that it can be recast as a root-finding problem.  First, let $P_\theta$ have a density $p_\theta$ with respect to, say, Lebesgue measure, and that $\theta \mapsto p_\theta(x)$ is differentiable for all $x$ except perhaps in a set of Lebesgue measure zero.  Second, suppose that $M$ is differentiable and that the order of differentiation and expectation can be interchanged, i.e., 
\[ R(\theta) := \dot M(\theta) = \int h_\theta(x) \, p_\theta(x) \, dx, \]
where $h_\theta(x) =  \dot g_\theta(x) + g_\theta(x) \dot\ell_\theta(x)$ and  $\ell_\theta(x) = \log p_\theta(x)$.  Then minimizing $M$ corresponds to finding a root of $R$.  Keep in mind that $R$ is a vector-valued function.

For root-finding, the basic Robbins--Monro algorithm proceeds as follows.  For an arbitrary $\theta_0$, define the updates
\begin{equation}
    \label{eq:robbins}
    \theta_{t+1} = \theta_t - \eps_{t+1} X_{t+1}, \quad t \geq 0, 
\end{equation}
where $(\eps_t)$ is a vanishing, deterministic step size sequence and $(X_t: t \geq 1)$ is a sequence of random vectors such that 
\[ E(X_{t+1} \mid X_1,\ldots,X_t) = R(\theta_t). \]

For some intuition as to why the Robbins--Monro algorithm works, consider the following heuristics.  For simplicity, let $q=1$.  First, try to minimize $M(\theta)$ by iterating gradient descent: $\theta_{t+1} = \theta_t - Y_t$ where $Y_t = \dot M(\theta_t) + \delta_t$ represents a noisy estimate of $\dot M(\theta_t)$.  Then, by substitution, the gradient descent update satisfies $\theta_{t+1} = \theta_t - \dot M(\theta_t) - \delta_t$.  In general, the noise term $\delta_t$ has approximately constant variance, so the iterates $\theta_t$ cannot converge to $\theta^\star$.  On the other hand, if we apply the Robbins-Monro update the noise term $\delta_t$ is multiplied by the vanishing step-size and can safely be ignored for large enough $t$.  Taking the argument one step further, suppose $\dot M(\theta_t) <0$ for $\theta_t < \theta^\star$ and $\dot M(\theta_t) >0$ for $\theta_t > \theta^\star$.  For $\theta_t<\theta^\star$ the iterates satisfy
\begin{align*}
    E(\theta_{t+1} \mid \theta_t) &= \theta_t - \epsilon_t\dot M(\theta_t) - \epsilon_tE(\delta_t)\\
    & = \theta_t - \epsilon_t\dot M(\theta_t)\\
    & > \theta_t;
\end{align*}
the iterates satisfy the reverse inequality when $\theta_t > \theta^\star$. Therefore, the {\em expected direction} of the next iteration is towards the minimizer $\theta^\star$. 

\subsection{Convergence Properties}

The claim is that, under certain conditions, the sequence $(\theta_t)$ defined by the Robbins--Monro algorithm converges almost surely to the root $\theta^\star$ of $R$ as $t \to \infty$. A particularly elegant proof of this convergence property is based on the following almost supermartingale convergence theorem of \citet{robbins.1971}.  Given a probability space and an increasing sequence $\{\mathcal{A}_t: t \geq 1\}$ of sub-$\sigma$-algebras on that space, they define a sequence of random variables $\{Z_t: t \geq 1\}$ to be an {\em almost supermartingale} if there exists non-negative random variables $\beta_t$, $\xi_t$, and $\zeta_t$ such that 
\begin{equation}
\label{eq:asm}
E(Z_{t+1} \mid \mathcal{A}_t) \leq (1 + \beta_t) Z_t + \xi_t - \zeta_t, \quad t \geq 1. 
\end{equation}
Then their Theorem~1 states that, if both $\sum_t \beta_t$ and $\sum_t \xi_t$ converge almost surely, then 
\[ \text{$\lim_{t \to \infty} Z_t$ exists} \quad \text{and} \quad \sum_{t=1}^\infty \zeta_t < \infty \quad \text{almost surely}. \]

To relate this to the Robbins--Monro algorithm, write $Z_t = \|\theta_t - \theta^\star\|^2$, where $\|\cdot\|$ denotes the usual $\ell_2$-norm on $\RR^q$.  Then we set $\mathcal{A}_t$ to be the $\sigma$-algebra generated by $\{Z_1,\ldots,Z_t\}$, take conditional expectation, and simplify.  The following theorem is a slight generalization of Application~2 in \citet{robbins.1971}.

\begin{thm}
\label{thm:rm}
Suppose the function $R$ and its root $\theta^\star$ satisfy 
\begin{equation}
\label{eq:root}
(\theta - \theta^\star)^\top R(\theta) \geq 0, \quad \text{for all $\theta \in \RR^q$}, 
\end{equation}
with equality if and only if $\theta=\theta^\star$.  Moreover, assume that there exists positive $a$ and $b$ such that 
\begin{equation}
\label{eq:Xbound}
E(\|X_{t+1}\|^2 \mid \mathcal{A}_t) \leq a + b\|\theta_t\|^2. 
\end{equation}
If the positive step size sequence $\{\eps_t\}$ satisfies 
\begin{equation}
\label{eq:decay}
\sum_{t=1}^\infty \eps_t = \infty \quad \text{and} \quad \sum_{t=1}^\infty \eps_t^2 < \infty, 
\end{equation}
then the Robbins--Monro sequence in  \eqref{eq:robbins} satisfies $\theta_t \to \theta^\star$ almost surely as $t \to \infty$.  
\end{thm}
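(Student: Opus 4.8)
The plan is to apply the Robbins--Siegmund almost supermartingale theorem to the squared-error sequence $Z_t = \|\theta_t - \theta^\star\|^2$, so the first task is to manufacture a recursion of the form \eqref{eq:asm}. Taking $\mathcal{A}_t = \sigma(X_1,\ldots,X_t)$, under which $\theta_0,\ldots,\theta_t$ and hence $Z_t$ are measurable, I would substitute the update \eqref{eq:robbins} into $Z_{t+1}$ and expand the square to obtain
\[
Z_{t+1} = Z_t - 2\eps_{t+1}(\theta_t - \theta^\star)^\top X_{t+1} + \eps_{t+1}^2 \|X_{t+1}\|^2.
\]
Conditioning on $\mathcal{A}_t$ and using $E(X_{t+1}\mid\mathcal{A}_t) = R(\theta_t)$ together with the moment bound \eqref{eq:Xbound} gives
\[
E(Z_{t+1}\mid\mathcal{A}_t) \leq Z_t - 2\eps_{t+1}(\theta_t-\theta^\star)^\top R(\theta_t) + \eps_{t+1}^2\bigl(a + b\|\theta_t\|^2\bigr).
\]

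The next step is cosmetic but essential: rewrite the final term in terms of $Z_t$ via $\|\theta_t\|^2 \le 2\|\theta_t-\theta^\star\|^2 + 2\|\theta^\star\|^2 = 2Z_t + 2\|\theta^\star\|^2$. This casts the recursion exactly in the form \eqref{eq:asm} with $\beta_t = 2b\eps_{t+1}^2$, $\xi_t = \eps_{t+1}^2(a + 2b\|\theta^\star\|^2)$, and $\zeta_t = 2\eps_{t+1}(\theta_t-\theta^\star)^\top R(\theta_t)$, where $\zeta_t \ge 0$ by the root condition \eqref{eq:root}. The summability requirements $\sum_t\beta_t < \infty$ and $\sum_t\xi_t < \infty$ follow immediately from $\sum_t\eps_t^2 < \infty$ in \eqref{eq:decay}. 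Theorem~1 of \citet{robbins.1971} then yields, almost surely, that $Z_\infty := \lim_{t\to\infty} Z_t$ exists and that $\sum_{t}\eps_{t+1}(\theta_t-\theta^\star)^\top R(\theta_t) < \infty$.

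The remaining and most delicate step is to show that the limit is in fact zero, and this is where I expect the main obstacle to lie: the supermartingale theorem only guarantees that $Z_t$ settles to \emph{some} limit, and ruling out a spurious nonzero one requires a separate argument. The key is to combine the finiteness of the above series with the divergence $\sum_t \eps_{t+1} = \infty$ from \eqref{eq:decay}: were the nonnegative quantity $(\theta_t-\theta^\star)^\top R(\theta_t)$ bounded below by a positive constant for all large $t$, the weighted series would diverge, so necessarily $\liminf_t (\theta_t-\theta^\star)^\top R(\theta_t) = 0$. Since $Z_t$ converges, the iterates $(\theta_t)$ are bounded, so I can pass to a subsequence along which $\theta_t \to \tilde\theta$ and $(\theta_t-\theta^\star)^\top R(\theta_t) \to 0$. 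Invoking continuity of $R$ then forces $(\tilde\theta-\theta^\star)^\top R(\tilde\theta) = 0$, and the \emph{strict} form of \eqref{eq:root}, equality only at $\theta^\star$, gives $\tilde\theta = \theta^\star$. Hence $Z_t \to 0$ along this subsequence, and because the full sequence $Z_t$ converges, $Z_\infty = 0$, that is, $\theta_t \to \theta^\star$ almost surely. It is worth flagging that continuity of $R$ and the strictness in \eqref{eq:root} are both used precisely in this final compactness argument, and neither enters the supermartingale bookkeeping.
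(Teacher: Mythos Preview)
Your proposal is correct and follows essentially the same route as the paper: apply Robbins--Siegmund to $Z_t=\|\theta_t-\theta^\star\|^2$, identify $\beta_t$, $\xi_t$, $\zeta_t$ exactly as you do, then use $\sum_t\eps_t=\infty$ together with $\sum_t\zeta_t<\infty$ to force the limit to zero. Your final compactness-plus-continuity argument is in fact more explicit than the paper's (which simply asserts ``it is clear'' that the limit must be $0$), and your $\xi_t$ correctly retains the $a\eps_{t+1}^2$ term that the paper's bookkeeping drops.
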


\begin{proof}
First, it is easy to see that 
\begin{align*}
Z_{t+1} & = \|\theta_t - \eps_{t+1} X_{t+1} - \theta^\star\|^2 \\
& = Z_t + \eps_{t+1}^2 \|X_{t+1}\|^2 - 2 \eps_{t+1} (\theta_t - \theta^\star)^\top X_{t+1}.
\end{align*}
Taking conditional expectation gives 
\begin{align*}
E(Z_{t+1} \mid \mathcal{A}_t) & = Z_t + \eps_{t+1}^2 E(\|X_{t+1}\|^2 \mid \mathcal{A}_t) \\
& \qquad -2\eps_{t+1} (\theta_t - \theta^\star)^\top R(\theta_t).
\end{align*}
By \eqref{eq:Xbound} and the inequality $(a + b)^2 \leq 2(a^2 + b^2)$, 
\[ E(\|X_{t+1}\|^2 \mid \mathcal{A}_t) \leq a + 2bZ_t + 2b\|\theta^\star\|^2. \]
Therefore, if we set 
\begin{align*}
\beta_t & = 2b\eps_{t+1}^2 \\
\xi_t & = 2b\|\theta^\star\|^2 \eps_{t+1}^2 \\
\zeta_t & = 2 \eps_{t+1} (\theta_t - \theta^\star)^\top R(\theta_t),
\end{align*}
then they are all positive and inequality \eqref{eq:asm} holds.  By the second condition in \eqref{eq:decay}, it follows that both $\beta_t$ and $\xi_t$ are summable, so it follows from Theorem~1 in \citet{robbins.1971} that $Z_t$ has a limit and $\sum_t \zeta_t$ is finite almost surely.  To see that the $Z_t$ limit must be 0, by the first condition in \eqref{eq:decay}, the only way $\sum_t \zeta_t$ could be finite is if $(\theta_t - \theta^\star)^\top R(\theta_t)$ vanishes almost surely, at least on a subsequence.  Since $Z_t$ has a limit, it is clear that the only way this latter claim holds is if the $Z_t$ limit is 0.
\end{proof}

Stronger and more modern results for almost sure convergence of the Robbins--Monro process are available.  For a glimpse of what modern proofs entail, consider rewriting update \eqref{eq:robbins} by adding and subtracting $\eps_{t+1} R(\theta_t)$ to get
\[ \theta_{t+1} = \theta_t - \eps_{t+1} R(\theta_t) - \eps_{t+1}\{ X_{t+1} - R(\theta_t)\}. \]
The latter term is a zero-mean martingale difference sequence and, under mild conditions, would converge to 0 almost surely.  Ignoring the negligible martingale term and rewriting the above relationship, we get 
\[ \eps_{t+1}^{-1} (\theta_{t+1} - \theta_t) \approx R(\theta_t). \]
The left-hand side resembles a derivative of ``$t \mapsto \theta_t$,'' so there is a close connection between the asymptotic properties of the Robbins--Monro process and that of solutions to the ordinary differential equation ``$\frac{d}{dt} \theta_t = R(\theta_t)$.''  See \citet{martin.2008} for an overview, and \citet{kushner.2003} for a comprehensive account.  

Besides almost sure convergence there are results characterizing the random behavior of $(\theta_t - \theta^\star)$ for large $t$.  An early reference is \citet{sacks.1958}, which shows that under certain regularity conditions $t^{1/2}(\theta_t - \theta^\star)$ converges in distribution to a normal random variable with variance proportional to $\epsilon_0^2$ where the step size is given by $\epsilon_t = \epsilon_0 t^{-1}$.

\section{Variations on Robbins--Monro}
\label{s:b}

\subsection{Finite Differences}
\label{s:b1}

A key ingredient of the update in \eqref{eq:robbins} is the identification of an unbiased estimator of the gradient $\dot M(\theta)$.  In general, it may be hard to find such an estimator.  Instead, the update can be modified to use a finite-difference approximation.  Let $e_i$ denote the unit vector in direction $i$ and define 
\[ Y_{t, i} = \frac{g(X^+_i)-g(X^-_i)}{2c_t}, \]
for $c_t>0$ fixed or vanishing and where 
\[ X_i^+\sim P_{\theta_t + c_t e_i} \quad \text{and} \quad X_i^-\sim P_{\theta_t - c_t e_i}. \]
Setting $Y_{t} = (Y_{t,1}, \ldots, Y_{t,q})$ the modified update is
\begin{equation}
    \label{eq:kiefer}
    \theta_{t+1} = \theta_t - \eps_tY_t.
\end{equation}
\citet{kiefer.1952} originated the update in \eqref{eq:kiefer} and proved consistency of ${\theta_t}$.  

To see how the finite-difference approximation may affect the behavior of the algorithm we'll rewrite the update in \eqref{eq:kiefer}.  Define the random Monte Carlo errors 
\[ \psi_{t,i} = M(\theta_t+c_te_i) - g(X_i^+) - \{M(\theta_t-c_te_i) - g(X_i^-)\}, \]
which are simply differences between the true function values and noisy estimates obtained from Monte Carlo sampling.  Let 
\[ \dot M_i(\theta_t) - \beta_{t,i} = \frac{M(\theta_t+c_te_i)-M(\theta_t-c_te_i)}{2c_t}, \]
so that $\beta_{t,i}$ denotes the error from approximating the derivative by the finite difference.  Then, the update in \eqref{eq:kiefer} can be written
\[\theta_{t+1} = \theta_t + \eps_t\dot M(\theta_t) - \eps_t(\psi_t/2c_t) - \eps_t\beta_t\]
where $\psi_t = (\psi_{t,1},\ldots, \psi_{t,q})$ and $\beta_t = (\beta_{t,1},\ldots, \beta_{t,q})$.  As with the Robbins--Monro update, the asymptotic behavior of the Kiefer--Wolfowitz algorithm follows the behavior of the differential equation $d\theta/dt = \dot M(\theta_t)$ provided the error terms $\eps_t \psi_t/(2c_t)$ and $\eps_t\beta_t$ vanish.  According to \citet{kushner.2003}, $\beta_t = O(c_t)$, so the choice of $c_t$ presents a kind of bias-variance trade-off.  A large $c_t$ results in worse approximation of the derivative $\dot M(\theta_t)$, but lowers the Monte Carlo noise term $\psi_t/(2c_t)$.  

The variance of the Monte Carlo noise term can be minimized by maximizing the covariance of $g(X_i^+)$ and $g(X_i^-)$. One method that may improve the practical performance of the Kiefer--Wolfowitz update is to use correlated Monte-Carlo samples.  For example, if $P_\theta$ can be sampled by an inverse-CDF transform, then we could sample $U\sim \text{Unif}(0,1)$ and let $X_i^{\pm} = P_{\theta\pm c_t e_i}^{-1}(U)$.    

\citet{sacks.1958} also showed asymptotic normality of the Kiefer--Wolfowitz iterates under regularity conditions.  If the step size is taken to be $\epsilon_t = \epsilon_0 t^{-1}$, then $t^{1/2} c_t(\theta_t - \theta^\star)$ converges in distribution to a normal random variable with variance proportional to $\epsilon_0^2$.  \citet{sacks.1958} specifies $c_t\rightarrow 0$, which implies the Kiefer--Wolfowitz procedure converges more slowly than the Robbins--Monro procedure; in other words, there is a significant cost to approximating derivatives by finite differences.      

\subsection{Parameter Constraints}
\label{s:b2}

When estimating lower expectations it is natural to confine the parameter space to compact subsets; and see the example in Section~\ref{ss:ex1}.  Updates \eqref{eq:robbins} and \eqref{eq:kiefer} offer no guarantee the iterates $\theta_t$ will remain within any finite neighborhood of the initial point $\theta_0$.  A simple fix projects the algorithm to the constraint space.  Let $\Theta_C\subset \Theta$ denote a compact constraint space, usually a rectangle in $\mathbb{R}^q$, and let $d:\mathbb{R}^q \times \mathbb{R}^q\mapsto \mathbb{R}^+$ denote a metric on $\mathbb{R}^q$.  Let $\mathsf{proj}_d(\theta)$ equal the value of $\theta'$ minimizing $d(\theta,\theta')$ over $\theta'\in \Theta_C$.  Then, the projected update has the form
\[\theta_{t+1} = \mathsf{proj}_d\{\theta_t - \eps_t Y_t\}.\]

\subsection{Averaging Iterates}
\label{s:b3}

In practice the analyst applies update \eqref{eq:robbins} or \eqref{eq:kiefer} until they meet some pre-specified convergence criteria, at which point they report the most up-to-date iterate $\theta_t$.  It turns out that it may be advantageous instead to report the average of the iterates $\bar\theta = t^{-1}\sum_{s=1}^t \theta_s$ as the final estimate.  In practice, the updates generally converge more quickly with larger step sizes $\eps_t$.  But, large step sizes also increase iterate variability, and averaging the iterates naturally reduces this variability.  Furthermore, the practical benefit of iterate averaging holds up in theory.  When $\eps_t = O(t^{-1})$ it can be shown the mean squared error $E\|\theta_t-\theta^\star\|_2^2$ behaves like $t^{-1}$, but the corresponding mean squared error for the averaged iterates $\bar \theta_t$ vanishes like $t^{-1}$ even when $\eps_t=O(t^{-1/2})$; see \citet{polyak} and \citet{ kushner.2003}.

\subsection{Averaging or Bounding Monte Carlo Samples}
\label{s:b4}

In addition to averaging iterates, it may be helpful to average over $M\geq 1$ Monte Carlo samples at each iteration.  One reason to take $M>1$ samples is when the function $M(\theta)$ likely may be zero, for example, when $M(\theta)$ is the probability of a rare event.  In that case, the estimate of the gradient could vanish, which may trigger the convergence criteria or simply cause the iterates to get stuck at a constant value.  For the Robbins--Monro update \eqref{eq:robbins}, replace $Y_t$ by $\overline Y_t = M^{-1}\sum_{j=1}^M Y_{t}^j$ where each $Y_{t}^j$ is an independent sample with mean $\dot M(\theta_t)$.  For the Kiefer--Wolfowitz update \eqref{eq:kiefer}, the finite difference approximation can be computed by the Monte Carlo average $\overline Y_t = M^{-1}\sum_{j=1}^M Y_{t}^j$, where $Y_{t}^j = (Y_{t,1}^j,\ldots, Y_{t,q}^j)$, 
\[ Y_{t,i}^j = \frac{g(X^{+,j}_i)-g(X^{-,j}_i)}{2c_t}, \]
and $X^{\pm,j}_i\sim P_{\theta_t\pm c_t e_i}$, for $j=1,\ldots,M$.  The obvious drawback to observation averaging is that it requires many more Monte Carlo samples.

In other cases, the function $M(\theta)$ may be highly sensitive to $\theta$ so that there is a chance to generate an extreme update.  \citet{kushner.2003} suggest upper-bounding the absolute value of Monte Carlo samples, or, equivalently, upper-bounding by a constant the absolute change in subsequent iterate values.  

\subsection{Second-Order Methods}

The convergence rate of iterative methods for optimizing deterministic functions usually improves when those methods employ second-order information, like a Hessian matrix.  Perhaps surprisingly, stochastic second-order methods can achieve faster rates than their first-order counterparts only by a constant multiple \citep{agarwal.etal}.  The stochastic or Robbins--Monro analog of the classical Newton method updates via
\begin{equation}
\label{eq:2nd}
    \theta_{t+1} = \theta_t - \eps_tZ_tY_t,
\end{equation}
where $Z_t$ is an unbiased estimator of $[\ddot M(\theta_t)]^{-1}$.  As with the first-order Robbins--Monro algorithm, it may be challenging to find an unbiased estimator of the inverse Hessian.  Similar to the Kiefer--Wolfowitz approach, the Hessian can be estimated by finite differences, but there is a more efficient approach.  In deterministic function optimization the BFGS methods \citep[e.g.,][Chap.~3.4]{fletcher1987} are computationally efficient alternatives to computing matrix inverses.  These methods iteratively update $Z_t$ by
\begin{align*}
\delta_t & = \theta_{t+1}-\theta_t \\
\gamma_t & = Y_{t+1}-Y_t\\
Z_{t+1} & = \left(I - \frac{\gamma_t \delta_t^\top}{\delta_t^\top \gamma_t}\right)^\top Z_t \left(I - \frac{\gamma_t \delta_t^\top}{\delta_t^\top \gamma_t}\right) + \frac{\delta_t \delta_t^\top}{\delta_t^\top \gamma_t}.
\end{align*}
\citet{byrd.etal} developed a stochastic version of the BFGS algorithm and demonstrated its performance in machine learning problems.

\subsection{ADADELTA and Choice of Step Size}
\label{s:b6}

All three updates discussed above rely on a user-specified step size or learning rate $\eps_t$.  When an average of iterates will be reported often $\eps_t = \eps_0 t^{-\tau}$ where $\tau \in (1/2,3/4)$.  In practice, the leading constant $\eps_0$ can have a surprisingly strong impact on the speed of convergence of the iterates.  When $\eps_0$ is too small the sequence $\{\theta_t\}$ may change very little and practically fail to converge because the user's maximum number of iterations is exceeded.  On the other hand, when $\eps_0$ is very large there is excessive variation early in the sequence $\{\theta_t\}$, and this may cause the sequence of iterate averages to converge slowly.

Finding the optimal value $\eps_0$ is a challenge.  One strategy is to run a small number of iterations, say $10$, for several values of $\eps_0$ and choose the smallest value of $\eps_0$ such that a measure of variation (e.g., range or variance) of the corresponding $10$ iterates is sufficiently large.     

Alternative updating formulas, such as the ADADELTA method in \citet{zeiler}, do not require a user-specified step size at all.  Instead, ADADELTA  iteratively and adaptively computes a step size from the change in successive iterates and a running average of the gradient.  Select an averaging parameter close to one, e.g., $\rho = 0.995$.  Define the root mean square function $RMS(x) = (x^2 + \eta)^{1/2}$ for some small stabilizing constant, e.g., $\eta=0.995$.  The ADADELTA algorithm makes an initial update based only on the gradient, as in \eqref{eq:robbins}, and for $t>1$ updates according to:
\begin{enumerate}
    \item Compute gradient estimate $Y_t$
    \item Accumulate gradient $S_t = \rho S_{t-1} + (1-\rho)Y_t^2$
    \item Compute update $\delta \theta_t = -\frac{RMS(D_{t-1})}{RMS(S_t)}Y_t$
    \item Accumulate update $D_t = \rho D_{t-1} + (1-\rho)(\delta\theta_t)^2$
    \item Apply update $\theta_{t+1} = \theta_t + \delta \theta_t$
\end{enumerate}

\section{Examples}
\label{s:examples}

\subsection{A Gaussian Probability}
\label{ss:ex1}

The first example is taken from \citet{fetz.2019}.  Consider estimating the lower expectation $\min_\theta M(\theta)$ where $M(\theta) = E[1\{X\notin (-2,2)\}]$ and where $X\sim N(\theta, \sigma = 2)$.  The minimum of $M(\theta)$ occurs at $\theta = 0$.  Stochastic optimization techniques experience the greatest gains over grid search Monte Carlo estimation when the parameter space is multi-dimensional.  Nevertheless, this simple example illustrates the practical differences between the variations of stochastic optimization described in Section~\ref{s:b}.  

First, consider the Robbins--Monro update in \eqref{eq:robbins} run for $1000$ iterations.  Figure~\ref{fig:1} displays the iterates and average iterates for two learning rates: $t^{-1/2}$ and $5t^{-1/2}$.  There are two important features of the plot.  The larger learning rate produces much more variation in the iterates, which helps the sequence to quickly find the minimum and then randomly vary around that minimum.  The bias in the iterates coming from the initial point $\theta_0$ quickly dissipates and the average of iterates quickly settles down near $\theta = 0$.  In contrast, the sequence of iterates with learning rate $t^{-1/2}$ move very slowly towards $\theta = 0$; so slowly, in fact, that there is no benefit to averaging the iterates since the average remains biased towards the initial point $\theta_0 = 6$.      

\begin{figure}[t]
    \centering
        \includegraphics[width = 0.5 \linewidth]{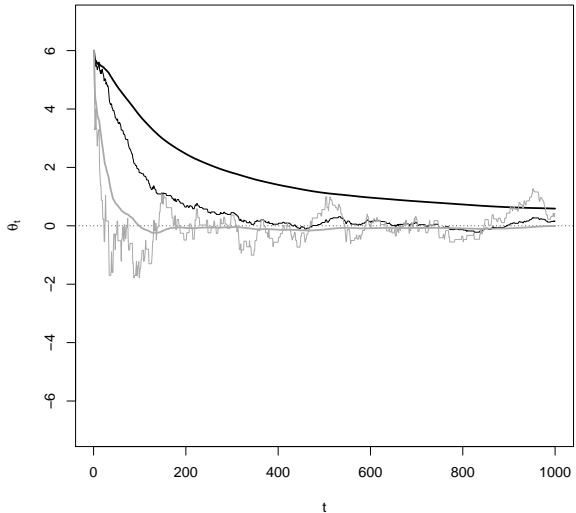}
    \caption{Sequences of iterates $\theta_t$ (fine lines) and their averages $\overline \theta_t$ (bold lines) for the Robbins--Monro update \eqref{eq:robbins}.  Black lines indicate learning rate $\eps_t = t^{-1/2}$ while gray lines correspond to $\eps_t = 5t^{-1/2}$.}
    \label{fig:1}
\end{figure}

Next, we consider whether averaging over $M>1$ Monte Carlo samples may improve the performance of the Robbins--Monro update.  For the basic method, consider running the Robbins--Monro update in \eqref{eq:robbins} and with $\eps_t = 20t^{-1/2}$ for $1000$ iterations, and in each step using only $M=1$ Monte Carlo sample $Y_t$ to estimate the gradient.  For comparison, run the Robbins--Monro update only $100$ but use $M=10$ Monte Carlo samples in each iteration, and estimate the gradient by $\overline Y_t = \tfrac{1}{10}\sum_{j=1}^{10}Y_t^j$ for Monte Carlo samples $Y_t^1, \ldots, Y_t^{10}$.  The idea is to compare the performance of the two variations of \eqref{eq:robbins} for the same number of Monte Carlo samples.  Each method was run $1000$ times, and $100$ of those corresponding paths of iterate averages are displayed in Figures~\ref{fig:2}--\ref{fig:3}.  The figures do not suggest averaging Monte Carlo samples at each iteration speeds up convergence.  The last iterate average $\bar \theta_{100}$ for the averaging method was about $0.34$ on average and with standard deviation $0.44$ over $1000$ repetitions. In contrast, the last iterate average $\bar \theta_{1000}$ for the method that drew only one Monte Carlo sample per iteration has average value $0.01$ with standard deviation $0.19$ over $1000$ repetitions.  For reference, a simple grid search performing $10$ Monte Carlo samples at each of $100$ grid points performed worse than both methods, with an average solution of $-0.44$ and a standard deviation of $0.82$ over $1000$ repetitions. The takeaway is that averaging multiple Monte Carlo samples at each iteration does not speed up convergence.       

\begin{figure}[t]
    \centering
        \includegraphics[width = 0.5 \linewidth]{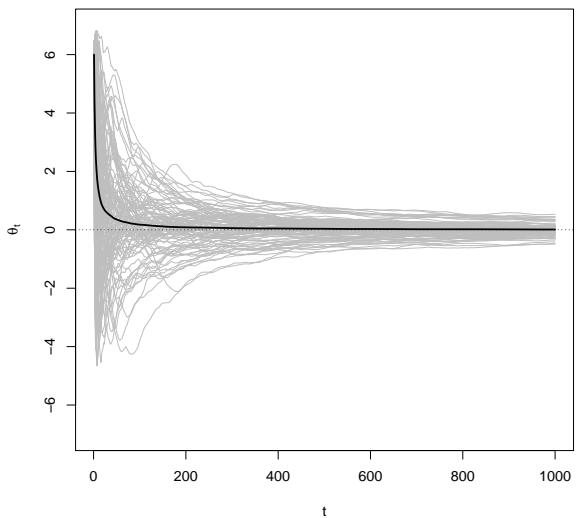}
        \caption{Averages of iterates $\bar \theta_t$ (gray lines) and their overall average (black line) over $1000$ repetitions of the Robbins--Monro update \eqref{eq:robbins} with no Monte Carlo averaging of samples.}
    \label{fig:2}
\end{figure}

\begin{figure}[t]
    \centering
        \includegraphics[width = 0.5 \linewidth]{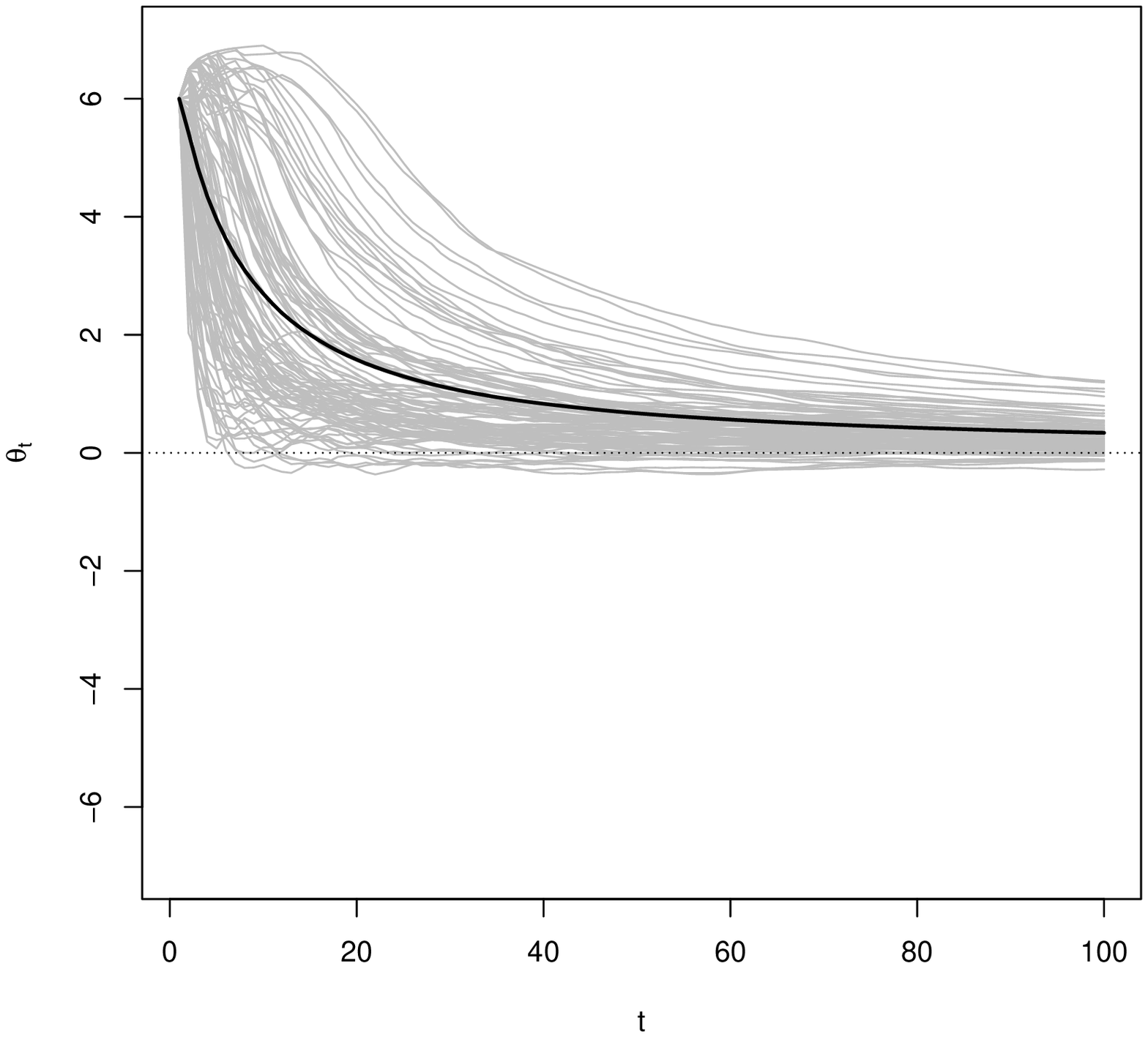}
        \caption{Averages of iterates $\bar \theta_t$ (gray lines) and their overall average (black line) over $1000$ repetitions of the Robbins--Monro update \eqref{eq:robbins} with averaging of $M=10$ Monte Carlo samples per iteration.}
    \label{fig:3}
\end{figure}

Besides Monte Carlo averaging, it is possible incorporating second derivative information into the Robbins--Monro update might improve its practical performance even if it provides no substantial benefit according to convergence theory.  However, in this example the Newton style of update in \eqref{eq:2nd} is very inefficient.  The trouble is that the update is highly sensitive to the second derivative, at least for the first several hundred iterations, and this sensitivity causes the algorithm to behave erratically.  One way to dampen the effect of high variation in the estimate of the second derivative is, of course, to use $M \gg 1$ Monte Carlo samples per iteration.  We used $M=50$ Monte Carlo samples to produce Figure~\ref{fig:4}, yet the iterate paths are still prone to excessive variation early in the sequence.  Figure~\ref{fig:5} shows the knock-on effect of this variation is iterate averages may become biased and slow to converge.  An alternative is to compute a moving average of the last $k$ iterates, rather than all the previous iterates.  Over $100$ repetitions the second-order updating method produced an average solution within $\pm0.01$ with a standard deviation of $0.08$, better than the first order methods tried above.  But, this is based on $50$ times the Monte Carlo samples, so does not represent an improvement over the Robbins--Monro method.      

\begin{figure}[t]
    \centering
        \includegraphics[width = 0.5 \linewidth]{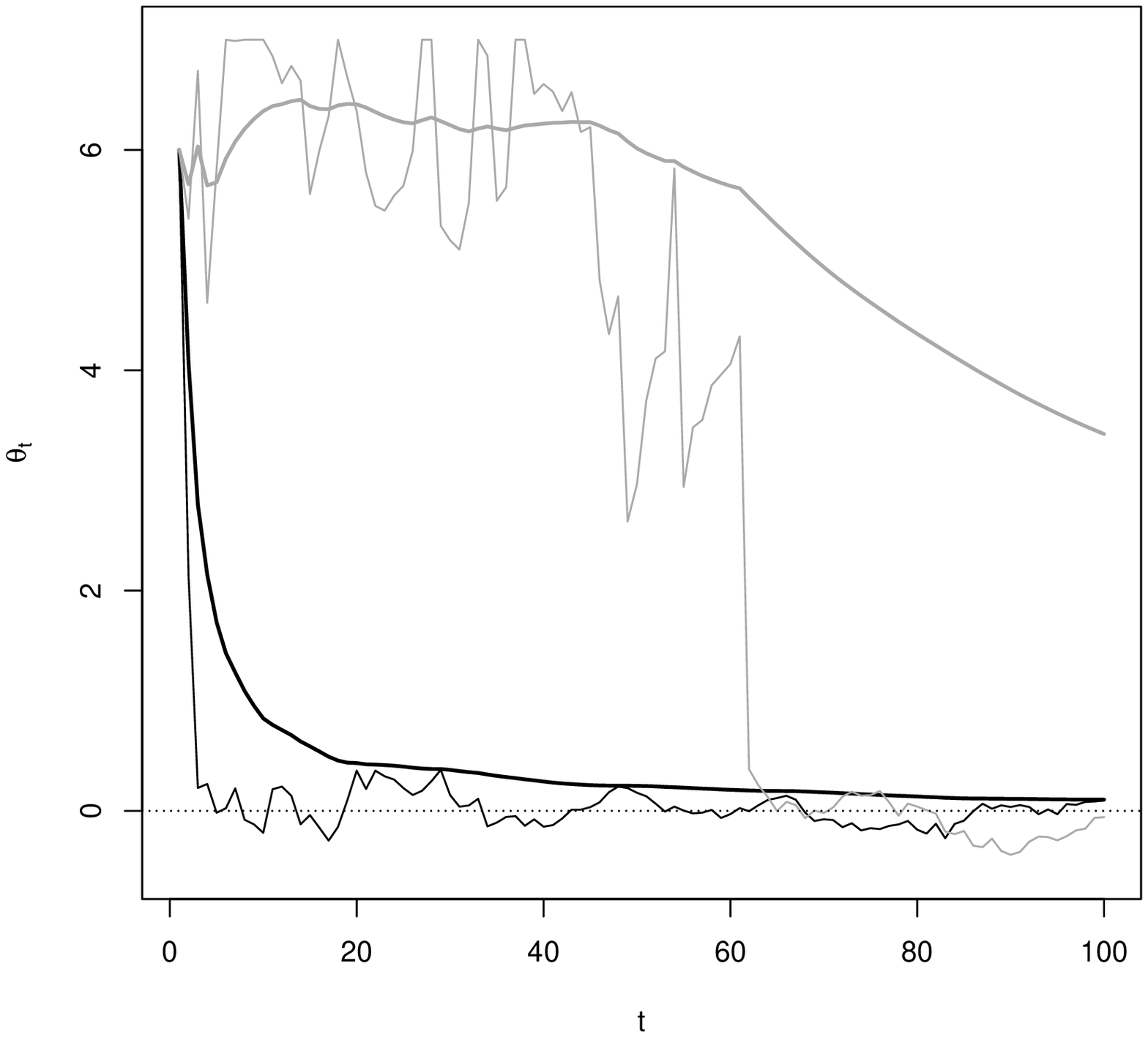}
        \caption{Two sequences of iterates $\theta_t$ (fine lines) and their averages $\bar \theta_t$ (bold lines) for the second-order update \eqref{eq:2nd}.}
    \label{fig:4}
\end{figure}

\begin{figure}[t]
    \centering
        \includegraphics[width = 0.5 \linewidth]{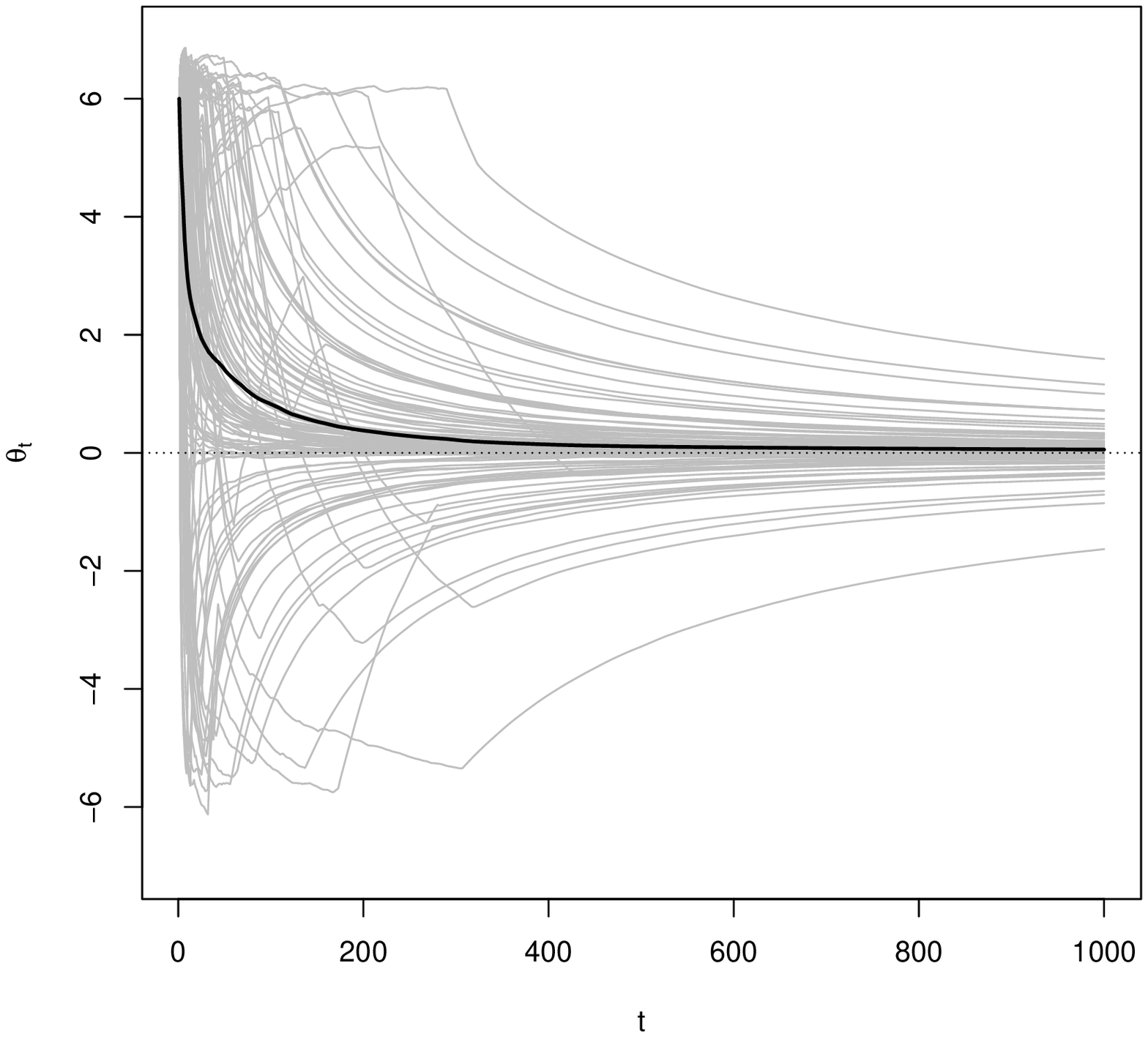}
        \caption{Averages of iterates $\bar \theta_t$ (gray lines) and their overall average (black line) over $1000$ repetitions of the second-order update \eqref{eq:2nd} with averaging of $M=50$ Monte Carlo samples per iteration.}
    \label{fig:5}
\end{figure}

So far, our investigation suggests the original Robbins--Monro update \eqref{eq:robbins} without Monte Carlo averaging of samples, but with averaging of iterates performs best provided a good choice of learning rate can be made.  Since it is not straightforward to choose a good learning rate in practice, we last consider the ADADELTA update of Section~\ref{s:b6} that makes an ``automatic" choice of learning rate.  Figure~\ref{fig:6} displays iterate and average iterate trajectories for the ADADELTA update.  After $10000$ iterations the average solution using the average of iterates is $0.19$ with a standard deviation of $0.50$; the median solution is $0.07$.  ADADELTA practically converges much more slowly than the Robbins--Monro update with a good choice of learning rate, but it may be an attractive option if the cost of using many iterations is not too high.      

\begin{figure}[t]
    \centering
    \includegraphics[width = 0.5 \linewidth]{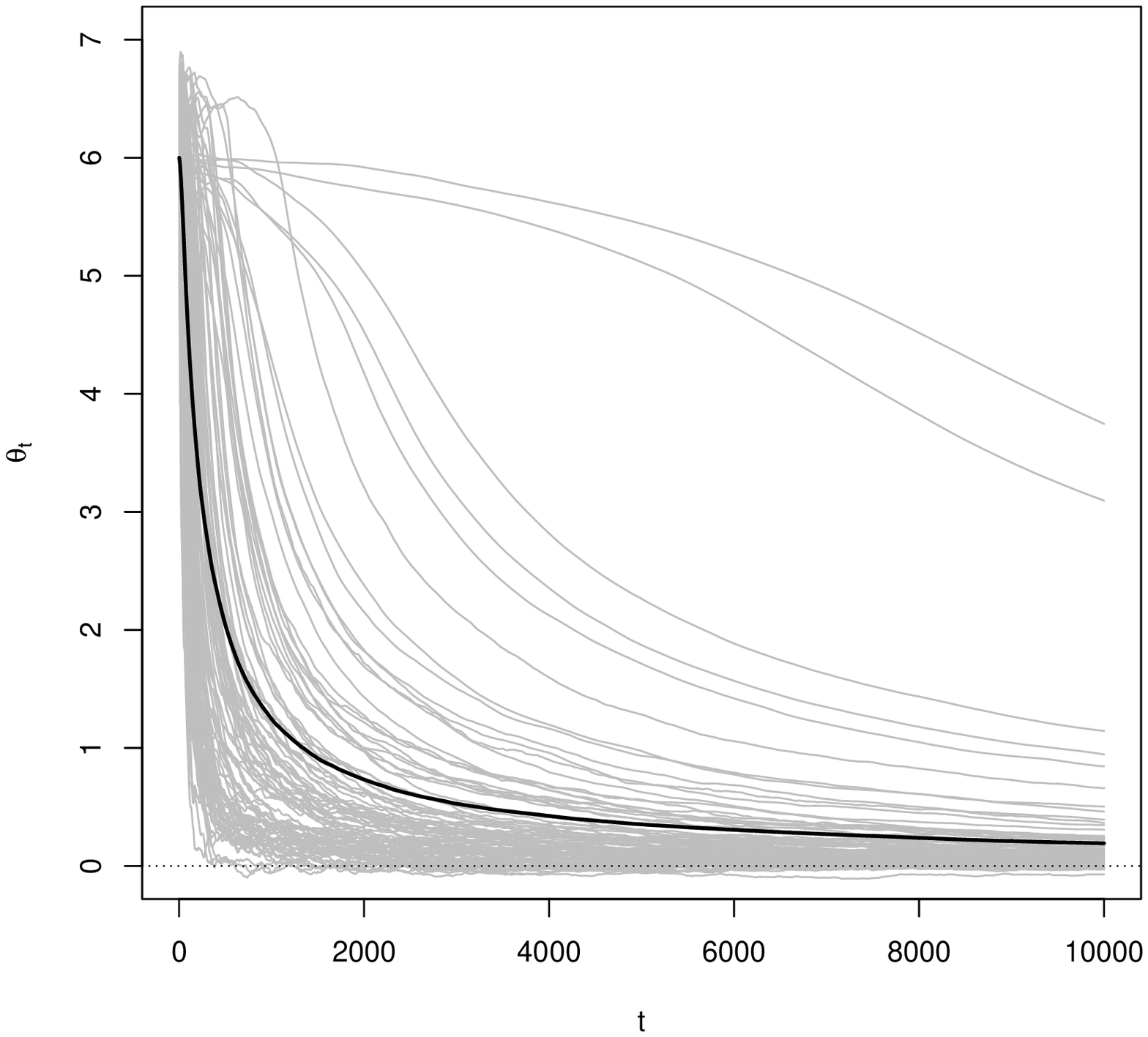}
    \caption{Averages of iterates $\bar \theta_t$ (gray lines) and their overall average (black line) over $1000$ repetitions of ADADELTA.}
    \label{fig:6}
\end{figure}

\subsection{Inferential Models in Logistic Regression}
\label{ss:logistic}

Suppose we have binary data $Y=(Y_1,\ldots,Y_n)^\top$, independent, where $Y_i$ is a Bernoulli random variable with parameter $p_i = F(x_i^\top \theta)$, where $x_i$ is a fixed/deterministic $q$-vector of explanatory variables, $\theta$ is a $q$-vector of coefficients, and $F$ is the logistic distribution function, $F(z) = (1 + e^{-z})^{-1}$.  One reason the logistic distribution is preferred for binary regression is its connection to the {\em odds} of $Y=1$, which satisfy
\[\log\left(\frac{p_i}{1-p_i}\right) = x_i^\top \theta.\]
Consequently, logistic regression parameters have a convenient interpretation similar to linear regression slope parameters: for a unit increase in $x_{ij}$ the logarithm of the odds increases by $\theta_j$.  Relevant questions like ``does predictor $x_j$ affect/increase/decrease the odds?'' can be investigated by evaluating suitable lower and upper probabilities to the respective assertions $A=\{\theta: \theta_j = 0\}$, $A=\{\theta: \theta_j > 0\}$, and $A=\{\theta: \theta_j < 0\}$.  

As discussed in Section~\ref{sec:intro}, if the goal is inference on $\theta$ based on observed data $Y=\yobs$, we can construct a generalized inferential model as follows.  Define 
\[ T(y,\theta) = -\log \{ L_y(\theta) / L_y(\hat\theta_y) \}, \]
where $L_y$ and $\hat\theta_y$ denote the likelihood function and the maximum likelihood estimator, respectively, based on a data set $y$, which depends implicitly on $F$ and the explanatory variables $x_1,\ldots,x_n$.  And, let $P_\theta$ denote the joint distribution of $Y$, which depends explicitly on the parameter $\theta$ and implicitly on the (deterministic) explanatory variables $x_1,\ldots,x_n$.  Then the plausibility contour is given by 
\[ \pi(\theta; \yobs) = P_\theta\{T(Y,\theta) > T(\yobs,\theta)\}, \]
and to evaluate the upper probability of an assertion $A$ about $\theta$ requires solving the optimization problem
\[ \uPi(A; \yobs) = \sup_{\theta \in A} \pi(\theta; \yobs). \]
One reason to compute this upper probability is to use it for evaluating a hypothesis test.  In the inferential modeling framework a level-$\alpha$ test of $H_0:\theta\in\Theta_0$ rejects if the inferential model upper probability $\uPi(\Theta_0; \yobs) < \alpha$, which is similar to a $p$-value-based rejection rule.  The general theory ensures, among other things, that the aforementioned test controls the frequentist Type~I error at level $\alpha$.  

Next, we describe two simulation experiments to evaluate the performance of stochastic optimization in computing inferential model upper probabilities for logistic regression.  Our simulations mimic experiments with fixed, complete designs.  In our first simulation we define $q=3$ predictors $x_j$, $j=1,2,3$, each taking values in $\{0,1/3,2/3,1\}$ and we form the $64 \times 3$ fixed design matrix with one row for every combination of predictor values.  Given predictor vector $x_i = (x_{i,j}, x_{i,2}, x_{i,3})^\top$, response $Y_i$ is sampled from a Bernoulli distribution with success probability $p_i=F(x_i^\top \theta^\star)$ for $\theta^\star = (-2,-1,2)^\top$ and where $F(z)$ denotes the logistic distribution function.  Let $A := \{\theta=(\theta_1, \theta_2, \theta_3)^\top: \theta_2>0\}$.  To quantify our uncertainty about the answer to the question ``does predictor $x_2$ increase the odds of success?,'' we compute $\uPi(A; \yobs)$.

For the second simulation we again consider a logistic regression model but now with $q=4$ predictors, each taking values in the set $\{0,1/2,1\}$, so that the complete, fixed design matrix has $3^4 = 81$ rows.  The true coefficients are $\theta^\star = (-2,-1,2,1)^\top$, and our assertion of interest is $A := \{\theta: \theta_2>0\}$.  

For the stochastic optimization approach we utilize the Kiefer--Wolfowitz update in \eqref{eq:kiefer}, which requires approximating the gradient of $\pi(\theta_t; \yobs)$ by finite differences.  A Monte Carlo approximation of $\pi(\theta_t; \yobs)$ is given by
\[\tilde\pi_N(\theta_t) = N^{-1}\sum_{i=1}^N 1\{T(Y_n^{(j)},\theta_t) > T(\yobs, \theta_t)\},\]
where $Y_n^{(j)} = (Y_{1,n}^{(j)},\ldots, Y_{n,n}^{(j)})^\top$ and 
\[Y_{i,n}^{(j)}\stackrel{\text{\tiny ind}}{\sim} \text{Ber}\{F(x_i^\top\theta_t)\}, \quad i=1,\ldots,n. \]
Then, the $i^\text{th}$ component of the gradient of $\pi(\theta_t; \yobs)$ is approximated by the following difference of Monte Carlo approximations,
\[\frac{\tilde\pi_N(\theta_t+c_te_i)-\tilde\pi_N(\theta_t-c_te_i)}{2c_t},\]
where $e_i$ denotes the unit vector in direction $i$.  We run two different variations of the Kiefer--Wolfowitz algorithm, both with a step-size of $30t^{-0.5}$ and a finite-difference radius of $c_t = t^{-0.5}$, but with different numbers of samples $N$ used in the finite difference approximation and different numbers of iterations $M$.  For the first run we set $N= 16$ and $M = 1250$, and for the second run we set $N = 40$ and $M = 500$.  We do not use any stopping rule for early termination of the updates; they run until reaching $M$ iterations.  For each Kiefer--Wolfowitz algorithm we compare the approximations of $\uPi(A; \yobs)$ using averaged iterates versus the final iterate.  Stochastic optimization theory implies we should average iterates since we use a large step size satisfying $\epsilon_t =O(t^{-0.5})$.  For comparison to stochastic optimization, we also approximate $\uPi(A; \yobs)$ by a grid search method, which we describe below.

We run each simulation on a total of $200$ randomly generated response vectors.  For simulation 1 the total number of Monte Carlo samples used is $N\times M\times n \times 6 = 7,680,000$, and for simulation 2 the total is $N\times M\times n \times 8 = 12,960,000$.  Note that $2\times N\times n$ Monte Carlo samples are needed to approximate each element of the gradient vector.  We use nearly the same number of Monte Carlo samples for grid search.  In the first simulation we fix a grid with $15^3=3375$ points on the set $\theta\in [-3,3]\times [0,3]\times [-3,3]$ and approximate $\uPi(A; \yobs)$ using $36$ Monte Carlo samples at each point for a total of $15^3\times 36\times n = 7,760,000$ samples.  For the second simulation we reduce the density on the grid to $10^4$ points in the set $\theta\in [-3,3]\times [0,3]\times [-3,3]\times [-3,3]$ and reduce the number of Monte Carlo samples used to approximate $\uPi(A; \yobs)$ at each point to $16$ for a total of $ 10^4 \times16\times n = 12,960,000$ Monte Carlo samples.  

In both simulations, we compare the various estimates of the upper probability $\uPi(A; \yobs)$ to the plausibility contour $\pi(\tilde\theta;\yobs)$ evaluated at $\tilde\theta$, the constrained maximum likelihood estimator (MLE) under the constraint $\theta_2 \geq 0$.  The reason for this is two-fold.  First, we know that $\tilde\theta$ maximizes $\theta \mapsto T(\yobs, \theta)$ over $A$ so there is good reason to think the maximizer of the $\theta$-dependent probability that defines $\pi(\theta;\yobs)$ would also be maximized, at least approximately, at $\tilde\theta$.  Second, even if $\pi(\tilde\theta; \yobs)$ is not an especially accurate approximation of $\uPi(A; \yobs)$, we do know that $\uPi(A; \yobs) \geq \pi(\tilde\theta; \yobs)$.  Therefore, between two estimates of $\uPi(A; \yobs)$, say, $\text{Est}_1$ and $\text{Est}_2$, if 
\[ \pi(\tilde\theta; \yobs) \approx \text{Est}_1 \gg \text{Est}_2, \]
then $\text{Est}_1$ must be better than $\text{Est}_2$.  

Table~\ref{tbl:sim} displays the average of estimates of $\uPi(A; \yobs)$ over $200$ simulated response vectors under Settings~1 and 2 with $3$ and $4$ predictor variables, respectively.  The main takeaway is that grid search loses approximation accuracy as the dimension increases while the stochastic optimization approaches do not appear to lose accuracy.  Grid search declines in efficiency compared to the constrained MLE when increasing from $3$ to $4$ predictors due to taking a coarser grid and using fewer Monte Carlo samples to approximate $\pi(\theta; \yobs)$ at each grid point.  We see little difference in performance between different variations of Kiefer--Wolfowitz updating. The approaches using averaged iterates produce better results than without averaging, which may suggest substantial variability in iterates remained after $500$ iterations in the second simulation under the second run of Kiefer-Wolfowitz updating.  

Stochastic optimization's performance was as good or better than grid search,  
arguably even better than the results indicate.  First, in these simulations, the grid for grid search was chosen favorably as a subset that contains the constrained MLE with high probability. In practice, the choice of grid is challenging, and may not be guaranteed to contain the maximizer.  Second, in order to compare the algorithms, we used no stopping criteria for stochastic optimization.  As the example in Section~\ref{ss:ex1} illustrates, the number of iterations needed for practical convergence of stochastic optimization can vary considerably.  So, the  number of Monte Carlo samples needed to produce the stochastic optimization results in Table~1 is generally less than the number of samples actually used.

\begin{table}[t]
\begin{center}
\begin{tabular}{lcc}
\multicolumn{1}{c}{}       & \multicolumn{2}{c}{Estimate of $\uPi(A; \yobs)$}        \\
\multicolumn{1}{c}{Method} & \multicolumn{1}{c}{Sim 1} & \multicolumn{1}{c}{Sim 2} \\ \hline
Constrained MLE            &      0.55 (0.35)                      &     0.52 (0.35)                       \\
Grid Search               &       0.49 (0.34)                     &    0.41 (0.33)                        \\
K--W $M = 16$, averaging      &    0.48 (0.38)                       &      0.51 (0.36)                      \\
no averaging               &     0.48 (0.38)                       &   0.49 (0.34)                         \\
K--W $M = 40$, averaging      &      0.50 (0.37)                      &     0.51 (0.36)                       \\
no averaging               &      0.50 (0.37)                      &     0.22 (0.14)                      
\end{tabular}
\end{center}
\caption{Average estimates (and standard deviations) of $\uPi(A; \yobs)$ for two variations of Kiefer--Wolfowitz updating, both with and without averaging iterates.  A grid search approximation of $\uPi(A; \yobs)$ and a Monte Carlo approximation of $\pi(\tilde\theta;\yobs)$ at the constrained MLE are displayed for comparison.}
\label{tbl:sim}
\end{table}

\section{Conclusion}
\label{S:discuss}

Several recent works have studied Monte Carlo approximation of upper and lower expectations by grid search methods.  In theory these methods are guaranteed to provide accurate approximations, but in practice accuracy is limited by specification of a subset of the parameter space in which to conduct grid search. It is possible the analyst may specify a subset not containing the true optimal $\theta^\star$, in which case grid search is inconsistent, regardless of the fineness of the grid.  However, even if the analyst makes a good choice of subset, accuracy is limited by the fineness of the grid for a fixed computational cost.  This ``curse of dimensionality'' causes practical accuracy to decline sharply as the number of parameters increases.  

Stochastic optimization offers provably accurate approximation of upper and lower expectations along with better practical performance than grid search in multi-dimensional problems.  It can be challenging to decide which variation of stochastic optimization to use and how to choose values of tuning parameters, but these choices are no more challenging for the analyst than subset selection for grid search.  In our examples, the Robbins--Monro and Kiefer--Wolfowitz procedures with iterate averaging and relatively large step-sizes worked best.  

In light of recent observations in \citet{martin.2021}, much of what is considered ``frequentist statistical inference'' can be formulated using notions of imprecise probability, in particular, possibility theory.  This means that problems like the one presented in Section~\ref{ss:logistic}, where lower and upper probabilities are evaluated based on optimization, are of fundamental importance for statisticians and data scientists.  Therefore, it is important to be able to solve these problems as accurately and efficiently as possible.  The stochastic optimization tools presented here seem quite promising, but more work is needed to develop (a) general rules for tuning the algorithms' parameters and (b) software that is easy to use.

\end{document}